\DeclareMathOperator*{\minimize}{minimize}
\title{\LARGE \bf Bilinear Data-Driven Min-Max MPC: Designing Rational Controllers via Sum-of-squares Optimization}
\author{Yifan~Xie, Julian~Berberich, Robin~Str\"{a}sser, Frank Allg\"{o}wer
	\thanks{F. Allg\"{o}wer is thankful that his work was funded by Deutsche Forschungsgemeinschaft (DFG, German Research
Foundation) under Germany’s Excellence Strategy - EXC 2075 - 390740016 and under grant 468094890.
The authors thank the International Max Planck Research School
for Intelligent Systems (IMPRS-IS) for supporting Y. Xie and thank the Graduate Academy of the SC SimTech for supporting R.~Str\"{a}sser.
	}
	\thanks{The authors are with the Institute for Systems Theory and Automatic Control, University of Stuttgart, 70550 Stuttgart, Germany.
		{\tt\small  (email: \{yifan.xie,julian.berberich,robin.straesser,
        frank.allgower\}@ist.uni-stuttgart.de}). }
}
\newtheorem{mythm}{Theorem}
\newtheorem{mylem}{Lemma}
\newtheorem{remark}{Remark}
\newtheorem{assum}{Assumption}
\begin{document}
	
	\maketitle

\begin{abstract}
We propose a data-driven min-max model predictive control (MPC) scheme to control unknown discrete-time bilinear systems. Based on a sequence of noisy input-state data, we state a set-membership representation for the unknown system dynamics. Then, we derive a sum-of-squares (SOS) program that minimizes an upper bound on the worst-case cost over all bilinear systems consistent with the data. As a crucial technical ingredient, the SOS program involves a rational controller parameterization to improve feasibility and tractability. We prove that the resulting data-driven MPC scheme ensures closed-loop stability and constraint satisfaction for the unknown bilinear system.
We demonstrate the practicality of the proposed scheme in a numerical example.
\end{abstract}

\section{Introduction}\label{sec:1}
Data-driven control leverages available data to control systems without relying on explicit mathematical models.
While data-driven control is well-established for linear systems, some methods have also been extended to certain classes of nonlinear systems \cite{martin2023guarantees,berberich2024overview}.
Bilinear systems represent a special class of nonlinear systems, where the nonlinearity arises from the product of the states and input \cite{mohler1973bilinear}.
Bilinear systems can effectively model various real-world dynamics, e.g., the dynamics of a braking car, neutron kinetics, and biochemical reactions \cite{mohler1973bilinear}.
Model-based controller design methods have been proposed for bilinear systems, e.g., based on model predictive control (MPC) \cite{bloemen2001interpolation}, linear matrix inequalities (LMIs) \cite{strasser2023control} and sum-of-squares (SOS) optimization \cite{vatani2014control}.
However, these approaches require a known mathematical model of the system.

Data-driven methods for bilinear systems have recently gained an increasing interest since general nonlinear systems can be transformed into bilinear control systems via Koopman operator theory, and then be controlled via robust or predictive control methods \cite{bold2024data, strasser2025koopman}.
Additionally, data-driven control for bilinear systems has been considered in the setting where data is affected by stochastic noise\cite{chatzikiriakos2024end}.
However, these methods rely on a two-step procedure, i.e., they first use the collected data to identify the system dynamics by solving linear regression problems, and then apply robust controller design approaches to account for possible identification errors. 
This indirect data-driven controller design needs to account for the error bounds of the system identification for rigorous design, which can be conservative.
Direct data-driven controller design methods for bilinear systems have been proposed in \cite{yuan2022data,bisoffi2020bilinear}, however,  these papers consider the setting where the system is not affected by noise.

In the recent literature, direct data-driven frameworks were proposed for linear time-invariant (LTI) systems via Willems' Fundamental Lemma \cite{willems2005note,markovsky2021review} and data informativity \cite{van2023informativity}.
This framework is initially developed for LTI systems \cite{van2020noisy,berberich2023combining,persis2020formulas}, and extended to polynomial systems \cite{martin2023inference,guo2021data,strasser2021data} and linear parameter varying systems \cite{xie2024dataLPV,verhoek2024decoupling}.
Inspired by the model-based min-max MPC \cite{kothare1996robust}, data-driven min-max MPC scheme has been proposed to consider performance objectives and constraints for LTI systems using LMIs-based methods \cite{xie2024minmaxrobust}.
Existing data-driven control approaches for polynomial systems address continuous-time systems and, in particular, they require state derivative measurements, which are hard to obtain in practice.
On the other hand, data-driven control of polynomial systems in discrete time remains unexplored in this framework.

In this paper, we consider a specific class of discrete-time polynomial systems, i.e., bilinear systems, and derive a controller design method for such systems using no explicit model knowledge but only noisy data.
We formulate an SOS program that minimizes an upper bound on the bilinear data-driven min-max MPC cost and derive a corresponding rational feedback controller.
The proposed scheme guarantees that the closed-loop system is robustly stabilized to a robust positive invariant (RPI) set around the origin and satisfies the constraints.
On a technical level, this paper extends the results from \cite{xie2024minmaxrobust}, which consider linear systems, to bilinear systems.
In contrast to existing data-driven control approaches for polynomial systems, which require derevative data, the proposed method only relies on discrete input-state measurements.
Further, a numerical example shows the effectiveness of the proposed method.

This paper is structured as follows.
Section \ref{sec:2} introduces the considered bilinear systems and the MPC problem setup. 
In Section \ref{sec:3}, we derive a set-membership characterization for bilinear systems using the available data and propose the bilinear data-driven min-max MPC scheme.
Then, we prove closed-loop robust stability and constraint satisfaction.
In Section \ref{sec:4}, we illustrate the
effectiveness of the proposed
controller design method with a numerical example.
Finally, we conclude the paper in Section \ref{sec:5}.

\textbf{Notations:} The set of natural numbers is denoted by $\mathbb{N}$.
The set of integers from $a$ to $b$ is denoted by $\mathbb{I}_{[a, b]}$.
For a symmetric matrix $P$, we write $P\succ 0$ or $P\succeq 0$ if $P$ is positive definite or positive semi-definite, respectively, where negative (semi-)definite matrices are defined analogously.
For matrices $A$ and $B$ of compatible dimensions, we abbreviate $ABA^\top$ to $AB\begin{bmatrix}\star\end{bmatrix}^\top$.
The minimal eigenvalue of the matrix $P$ is denoted by $\lambda_{\min}(P)$.
We denote by $\mathbb{R}[x, d]$ the set of all polynomials in the variable $x\in\mathbb{R}^n$ with degree $d$, i.e, $p(x)=\sum_{\alpha\in\mathbb{N}^n, |\alpha|\leq d}s_\alpha x^\alpha$ with multi-index $\alpha=\begin{bmatrix}\alpha_1 &\ldots &\alpha_n\end{bmatrix}^\top\in\mathbb{N}^n$, real coefficients $s_\alpha \in\mathbb{R}$, monomials $x^\alpha=x_1^{\alpha_1}\ldots x_n^{\alpha_n}$, $|\alpha|=\alpha_1+\ldots +\alpha_n$.
The set of all $m\times n$ matrices with elements in $\mathbb{R}[x, d]$ is denoted by $\mathbb{R}[x, d]^{m\times n}$.
A matrix $S\in\mathbb{R}[x, 2d]^{n\times n}$ is an SOS matrix in $x$ if it can be decomposed as $S=T^\top T$ for some $T\in\mathbb{R}[x, d]^{m\times n}$ and we write it as $S\in \text{SOS}[x, 2d]^n$.
A matrix $S\in\mathbb{R}[x, 2d]^{n\times n}$ is strictly SOS if $(S-\epsilon I_n)\in \text{SOS}[x, 2d]^n$ for some $\epsilon>0$ and we write it as $S\in \text{SOS}_+[x, 2d]^n$.

\section{Problem Setup}\label{sec:2}
We consider discrete-time bilinear systems
\begin{equation}\label{system}
x_{t+1}=A_s x_t+\sum_{i=1}^{m}(B_{s,i}x_t+b_{s, i}) u_{i,t}
+\omega_t,
\end{equation}
where $x_t\in\mathbb{R}^n$ is the state,  $u_{i,t}$ is the $i$th element of the input $u_t\in\mathbb{R}^m$, $\omega_t\in\mathbb{R}^n$ is the process noise.
We assume that the system matrices $A_s\in\mathbb{R}^{n\times n}$ and $B_{s, i}\in\mathbb{R}^{n\times n}$, $b_{s, i}\in\mathbb{R}^n$ for all $i\in\mathbb{I}_{[1, m]}$ are unknown.
The system dynamics \eqref{system} can be written equivalently as
\begin{equation}\label{system1}
    x_{t+1}=A_sx_t+B_su_t+\tilde{B}_s(u_t\otimes x_t)+\omega_t,
\end{equation}
where $B_s=\begin{bmatrix}b_{s, 1} &\ldots &b_{s, m}\end{bmatrix}$ and $\tilde{B}_s=\begin{bmatrix}B_{s, 1} &\ldots &B_{s, m}\end{bmatrix}$.
We assume that the noise $\omega_t$ satisfies the following assumption.

\begin{assum}\label{assumption1}
For all $t\in\mathbb{N}$, the noise $\omega_t\in\mathbb{R}^n$ satisfies $\|\omega_t\|_G\leq 1$ for a known matrix $G\succ 0$.
\end{assum}

Although the system matrices $A_s, B_s, \tilde{B}_s$ are unknown, we have a sequence of input-state measurements $(U, X)$ of length $T$ generated from the system \eqref{system1}, which we arrange in matrices 
\begin{equation}\label{data}
    U=\begin{bmatrix}u_0^d \!\!&\ldots \!\!&u_{T-1}^d\end{bmatrix},
X=\begin{bmatrix}x_0^d \!\!&\ldots \!\!&x_{T-1}^d \!\!&x_T^d\end{bmatrix}.
\end{equation}
The sequence of noise $\omega_i^d$ for all  $i\in\mathbb{I}_{[0, T-1]}$ is unknown, but satisfies Assumption \ref{assumption1}.

In this paper, our goal is to design a data-driven control law that robustly stabilizes the unknown system \eqref{system1} such that the closed-loop input and state satisfy the constraints.
To quantify the performance of the controller, we define a quadratic stage cost function $\ell(u, x)=\|u\|_R^2+\|x\|_Q^2$, where $R, Q\succ 0$.
For the controller design, we consider the input and state constraints
$\|u_t\|_{S_u}\leq 1, \|x_t\|_{S_x}\leq 1, \forall t\in\mathbb{N}$,
where $S_u\succ 0, S_x\succeq 0$.
Ellipsoidal constraints are commonly used for LMI-based min-max MPC \cite{kothare1996robust} and are well-suited to the considered setting, which includes an ellipsoidal bound on the noise.

\section{Bilinear Data-Driven Min-Max MPC using SOS Optimization}\label{sec:3}
We first characterize the set of system matrices that are consistent with the input-state data and Assumption \ref{assumption1} in \ref{sec:3.1}.
In Section \ref{sec:3.2}, we formulate an SOS program to minimize an upper bound on the bilinear data-driven min-max MPC problem and to derive a rational controller.
We prove that the resulting closed-loop system is robustly stabilized and satisfies the constraints in Section \ref{sec:3.3}.

\subsection{Data-Driven System Characterization}\label{sec:3.1}

The set of system matrices $(A, B, \tilde{B})$ consistent with the data \eqref{data} and the bound on the noise in Assumption \ref{assumption1} is defined as
$\Sigma=\{(A, B, \tilde{B}): (A, B, \tilde{B})\in\Sigma_i, \forall i\in\mathbb{I}_{[0, T-1]}\}$,
where 
\begin{equation}\nonumber
\Sigma_{i}\!=\!\!\left\{\!(A, \!B, \!\tilde{B})\!:\!
\begin{gathered}
\omega_i^d\!=\!x_{i+1}^d\!-\!Ax_i^d\!-\!Bu_i^d\!-\!\tilde{B}(u_i^d\otimes x_i^d) \\
\text{ holds for some }\omega_{i}^d
\text{ satisfying } \|\omega_i^d\|_G\!\leq\! 1
\end{gathered}
\right\}.
\end{equation}

In the following lemma, we characterize the set of $(A, B, \tilde{B})$ consistent with the data \eqref{data}.
\begin{mylem}
Suppose Assumption \ref{assumption1} hold. The set $\Sigma$ is equal to
\begin{equation}\label{Sigma}
\left\{\!\!(A, B, \tilde{B}):\!\!\!\!\!\!
\begin{gathered}
\begin{bmatrix}
I \!\!\!&A \!\!\!&B \!\!\!&\tilde{B}
\end{bmatrix}
\!\!M(\tau(x))\!\!
\begin{bmatrix}
I \!\!\!&A \!\!\!&B \!\!\!&\tilde{B}
\end{bmatrix}^\top\!\succeq\! 0,  \\
\forall \tau(x)\!=\!(\tau_0(x), \ldots, \tau_{T-1}(x)),\\ \tau_i\in \text{SOS}[x, 2\alpha],
\alpha\in\mathbb{N}, i\in\mathbb{I}_{[0, T-1]}
\end{gathered}
\right\},
\end{equation}
where
$M(\tau(x))\!=\!\!\!\sum_{i=0}^{T-1}\tau_i(x)N_i$ and 
\[N_i=\begin{bmatrix}
I \!\!\!&x_{i+1}^d\\
0 \!\!\!&-x_i^d\\
0 \!\!\!&-u_i^d\\
0 \!\!\!&-(u_i^d \!\otimes\! x_i^d)
\end{bmatrix}\!\!\!
\begin{bmatrix}
    G^{-1} \!\!\!\!\!&0\\
    0 \!\!\!\!\!&-I
\end{bmatrix}\!\!\!
\begin{bmatrix}
I \!\!\!&x_{i+1}^d\\
0 \!\!\!&-x_i^d\\
0 \!\!\!&-u_i^d\\
0 \!\!\!&-(u_i^d \!\otimes\! x_i^d)
\end{bmatrix}^\top.\]
\end{mylem}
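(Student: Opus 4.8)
The plan is to prove the claimed set equality in two stages: first reduce membership in each $\Sigma_i$ to a single matrix inequality in the decision variables $(A,B,\tilde B)$, and then show that quantifying over all SOS multipliers $\tau_i(x)$ in \eqref{Sigma} reproduces exactly this finite family of inequalities. To lighten notation I write $\Theta=\begin{bmatrix}I &A &B &\tilde B\end{bmatrix}$ and $\omega_i^d=x_{i+1}^d-Ax_i^d-Bu_i^d-\tilde B(u_i^d\otimes x_i^d)$, so that $(A,B,\tilde B)\in\Sigma_i$ is by definition equivalent to $\|\omega_i^d\|_G\le 1$.

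The first step is the block computation $\Theta N_i\Theta^\top=G^{-1}-\omega_i^d(\omega_i^d)^\top$. This holds because multiplying $\Theta$ into the outer factor of $N_i$ maps its first block column to $I$ and its second column to the residual $\omega_i^d$, i.e.\ $\Theta\,[\,\cdots\,]=\begin{bmatrix}I &\omega_i^d\end{bmatrix}$; sandwiching $\mathrm{diag}(G^{-1},-I)$ then yields the stated expression. Since $G\succ 0$, a Schur complement shows $G^{-1}-\omega_i^d(\omega_i^d)^\top\succeq 0$ iff $1-(\omega_i^d)^\top G\,\omega_i^d\ge 0$, i.e.\ iff $\|\omega_i^d\|_G\le 1$. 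Hence $(A,B,\tilde B)\in\Sigma_i\iff\Theta N_i\Theta^\top\succeq 0$, and consequently $\Sigma=\{(A,B,\tilde B):\Theta N_i\Theta^\top\succeq 0,\ \forall i\in\mathbb{I}_{[0,T-1]}\}$.

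It remains to identify this intersection with the set \eqref{Sigma}. For $\subseteq$, let $(A,B,\tilde B)\in\Sigma$ and take any admissible multiplier tuple; each $\tau_i$ is SOS, hence $\tau_i(x)\ge 0$ for every $x$, so $\Theta M(\tau(x))\Theta^\top=\sum_{i=0}^{T-1}\tau_i(x)\,\Theta N_i\Theta^\top$ is a nonnegative combination of positive semidefinite matrices and is therefore positive semidefinite for all $x$. For $\supseteq$, I would use that the condition in \eqref{Sigma} must hold for \emph{every} admissible tuple, in particular the constant selection $\tau_j\equiv 1$, $\tau_i\equiv 0$ for $i\ne j$, all of which lie in $\text{SOS}[x,0]$. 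This collapses $\Theta M(\tau(x))\Theta^\top$ to $\Theta N_j\Theta^\top$, forcing $\Theta N_j\Theta^\top\succeq 0$ for each $j$ and thus $(A,B,\tilde B)\in\Sigma$.

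The only places demanding care are keeping the block partition consistent in the identity $\Theta N_i\Theta^\top=G^{-1}-\omega_i^d(\omega_i^d)^\top$ and reading $\succeq 0$ as pointwise-in-$x$ positive semidefiniteness of the polynomial matrix. I expect no deep obstacle: the conceptual content is merely that allowing state-dependent SOS multipliers neither enlarges nor shrinks the consistency set, because constant multipliers already isolate each individual data constraint while nonnegativity of the multipliers preserves semidefiniteness in the converse direction.
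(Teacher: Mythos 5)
Your proposal is correct and follows essentially the same route as the paper: it reduces each $\Sigma_i$ to the single quadratic matrix inequality $\begin{bmatrix}I &A &B &\tilde{B}\end{bmatrix}N_i\begin{bmatrix}I &A &B &\tilde{B}\end{bmatrix}^\top\succeq 0$ via the substitution for $\omega_i^d$ and a Schur complement, and then identifies the intersection over $i$ with the SOS-multiplier description (the paper outsources this last step to a citation, whereas you supply the short argument with constant multipliers explicitly). No gaps.
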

\begin{proof}
    To characterize the set $\Sigma_i$ using data, we replace $\omega_i^d$ by $x_{i+1}^d-Ax_i^d-Bu_i^d-\tilde{B}(u_i^d\otimes x_i^d)$ in $\|\omega_i^d\|_G\leq 1$.
It is then straightforward to show that the set $\Sigma_i$ is equal to 
\begin{equation}\label{sigma_i}
\left\{\begin{gathered}
(A, B, \tilde{B}):
\begin{bmatrix}
I \!\!\!&A \!\!\!&B \!\!\!&\tilde{B}
\end{bmatrix}
N_i
\begin{bmatrix}
I \!\!\!&A \!\!\!&B \!\!\!&\tilde{B}
\end{bmatrix}^\top\!\succeq\! 0
\end{gathered}
\right\},
\end{equation}
Similar to \cite{berberich2023combining} and using the characterization of the set $\Sigma_i$ in \eqref{sigma_i}, the set $\Sigma$ is equal to \eqref{Sigma}.
\end{proof}

\subsection{Bilinear Data-driven Min-Max MPC Scheme}\label{sec:3.2}

At time $t$, given input-state data \eqref{data} and an initial state $x_t\in\mathbb{R}^n$, the bilinear data-driven min-max MPC optimization problem is formulated as
\begin{subequations}\label{mpc}
\begin{align}
J_\infty^*&(x_t)=\min_{\bar{u}(t)}\max_{(A, B, \tilde{B})\in\Sigma}\sum_{k=0}^{\infty}\ell(\bar{u}_k(t), \bar{x}_k(t))\label{mpc:obj}\\
\text{s.t.}\quad &\bar{x}_{k+1}(t)\!=\!A\bar{x}_k(t)\!+\!B\bar{u}_k(t)\!+\!\tilde{B}(\bar{u}_k(t)\!\otimes\! \bar{x}_k(t)), \label{mpc:con1}\\
&\bar{x}_0(t)=x_t, \label{mpc:con2}\\
&\|\bar{x}_k(t)\|_{S_x}\leq 1, \forall (A, B, \tilde{B})\in\Sigma, k\in\mathbb{N},\label{mpc:con3}\\
&\|\bar{u}_k(t)\|_{S_u}\leq 1, \forall k\in\mathbb{N}.\label{mpc:con4}
\end{align}
\end{subequations}
In problem \eqref{mpc}, the optimization variable $\bar{u}(t)$ is a sequence of predicted inputs. 
The predicted state is initialized as $\bar{x}_0(t)=x_t$ in \eqref{mpc:con2} and the nominal system dynamics \eqref{mpc:con1} without the process noise $\bar{\omega}_k(t)$ is used for state prediction.
The objective of problem \eqref{mpc} is to minimize the worst-case infinite-horizon stage costs over all system matrices in $\Sigma$.
Moreover, input and state constraints are imposed in \eqref{mpc:con3}-\eqref{mpc:con4} for any $(A, B, \tilde{B})\in\Sigma$.

To effectively solve the problem \eqref{mpc} and obtain a tractable solution, we restrict our attention to a rational state-feedback control law.
According to \cite{vatani2014control}, an unstable discrete-time bilinear system can be globally stabilized by a rational control law if and only if the degrees of the polynomial in the numerator and the polynomial in the denominator are the same.
Therefore, we specifically focus on designing a rational control law \cite{strasser2024koopman}
\begin{equation}
    u(x)=\tfrac{1}{d(x)}K(x)x,\label{statefeedback}
\end{equation}
where $d\in \text{SOS}_{+}[x, 2\alpha]$ and $K\in\mathbb{R}[x, 2\alpha-1]^{m\times n}$.

In the following, we formulate an SOS program to find a rational state-feedback control law \eqref{statefeedback} that minimizes an upper bound on the optimal cost of \eqref{mpc}.
Given the state $x_t\in\mathbb{R}^n$, the input-state data \eqref{data}, and a constant $c>\lambda_{\min}(Q)$, we formulate the SOS program \eqref{sdp}, where $\Phi(x)=\begin{bmatrix}M_RL(x)\\ M_QHd(x)\end{bmatrix}$ with $M_R^\top M_R=R$, $M_Q^\top M_Q =Q$.
The optimal solution of \eqref{sdp} given the current state $x_t$ is denoted by $\gamma_{x_t}^\star, H_{x_t}^\star, L_{x_t}^\star, {d}_{x_t}^\star, \tau_{x_t}^\star$.
The corresponding optimal rational state-feedback controller is defined as $u^\star(x_t)=\frac{1}{d^\star_{x_t}(x_t)}K_{x_t}^\star(x_t)x_t$ with $K_{x_t}^\star(x_t)=L_{x_t}^\star(x_t)(H_{x_t}^\star)^{-1}$ and we consider the Lyapunov function $V(x_t)=\|x_t\|_{P_{x_t}^\star}^2$ with $P_{x_t}^\star=\gamma_{x_t}^\star (H_{x_t}^\star)^{-1}$.

\begin{figure*}
\begin{subequations}\label{sdp}
\begin{align}
    &\minimize_{\gamma>0, H\in\mathbb{R}^{n\times n}, L\in \mathbb{R}[x, 2\alpha-1]^{m\times n}, d\in \text{SOS}_+[x, 2\alpha], \tau\in\mathbb{R}[x, 2\alpha]^T}\gamma\label{sdp:obj}\\
    \text{s.t. }&\begin{bmatrix}1 &x_t^\top \\ x_t &H\end{bmatrix}\succeq 0, \tau(x)=(\tau_0(x), \ldots, \tau_{T-1}(x)), \tau_i\in\text{SOS}[x, 2\alpha], \forall i\in\mathbb{I}_{[0, T-1]},\label{sdp:con1}\\
    &\begin{bmatrix}
        \begin{bmatrix}
            d(x)(H-\frac{\gamma}{c}I) &0\\
            0 &0
        \end{bmatrix}-M(\tau(x)) &\begin{bmatrix} 0\\ Hd(x)\\
            L(x)\\
            L(x)\otimes x
            \end{bmatrix} &0 \\
           \begin{bmatrix}0 &Hd(x) &L(x)^\top&
            (L(x)\otimes x)^\top 
            \end{bmatrix} &d(x)H &\Phi(x)^\top\\
            0 &\Phi(x) &\gamma d(x) I
    \end{bmatrix}\in \text{SOS}[x, 2\alpha]^{4n+2m+mn},\label{sdp:con2}\\
    &\begin{bmatrix}
        H &H\\
        H &S_x^{-1}
    \end{bmatrix}\succeq 0, 
    \begin{bmatrix}
        d(x) H &L(x)^\top\\
        L(x) &d(x) S_u^{-1}
    \end{bmatrix}\in \text{SOS}[x, 2\alpha]^{n+m},\label{sdp:con3}
\end{align}
\end{subequations}
\rule{\textwidth}{0.4pt} 
\vspace{-30pt}
\end{figure*}

In the following theorem, we prove that the problem \eqref{sdp:obj}-\eqref{sdp:con2} minimizes an upper bound on the optimal cost of the problem \eqref{mpc:obj}-\eqref{mpc:con2}.

\begin{mythm}
Suppose that Assumption \ref{assumption1} holds and let a state $x_t\in\mathbb{R}^n$ and a constant $c>\lambda_{\min}(Q)$ be given. If there exist $\alpha\in\mathbb{N}$, $H\in\mathbb{R}^{n\times n}$, $L\in\mathbb{R}[x, 2\alpha-1]^{m\times n}$, $d\in \text{SOS}_+[x, 2\alpha], \tau\in\mathbb{R}[x, 2\alpha]^T, \gamma>0$ such that the constraints in \eqref{sdp:con1}-\eqref{sdp:con2} hold, then the optimal cost of \eqref{mpc:obj}-\eqref{mpc:con2} is guaranteed to be upper bounded by $\gamma$.
\end{mythm}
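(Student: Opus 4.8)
The plan is to read the optimal solution of \eqref{sdp} as a Lyapunov certificate for the rational feedback law $u(x)=\tfrac{1}{d(x)}L(x)H^{-1}x$, and to show that the quadratic function $V(x)=\gamma\,x^\top H^{-1}x$ (with $P=\gamma H^{-1}$) simultaneously (i) bounds the initial cost, $V(x_t)\le\gamma$, and (ii) satisfies the robust dissipation inequality $V(x^+)\le V(x)-\ell(u(x),x)$ for every $x$ and every $(A,B,\tilde B)\in\Sigma$. Item (i) is immediate from \eqref{sdp:con1}: since $H\succ0$, a Schur complement of $\begin{bmatrix}1 & x_t^\top\\ x_t & H\end{bmatrix}\succeq0$ gives $x_t^\top H^{-1}x_t\le1$, hence $V(x_t)\le\gamma$. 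The core of the argument is item (ii), which I would extract from the SOS matrix inequality \eqref{sdp:con2}.

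For (ii), first note that \eqref{sdp:con2} being an SOS matrix implies it is positive semidefinite for every fixed $x$. I would then apply a congruence transformation to this $(4n+2m+mn)$-dimensional matrix with $\operatorname{blkdiag}([I\ A\ B\ \tilde B],\,I_n,\,I_{m+n})$. The $(1,1)$ block becomes $d(x)(H-\tfrac{\gamma}{c}I)-[I\ A\ B\ \tilde B]M(\tau(x))[I\ A\ B\ \tilde B]^\top$; since $(A,B,\tilde B)\in\Sigma$, Lemma~1 (with the multipliers $\tau$ from the feasible solution) guarantees $[I\ A\ B\ \tilde B]M(\tau(x))[I\ A\ B\ \tilde B]^\top\succeq0$, so adding this nonnegative term back preserves positive semidefiniteness and leaves $\begin{bmatrix}d(H-\tfrac{\gamma}{c}I) & \Psi' & 0\\ \Psi'^\top & dH & \Phi^\top\\ 0 & \Phi & \gamma d I\end{bmatrix}\succeq0$, where the off-diagonal block collapses to $\Psi'=AHd(x)+BL(x)+\tilde B\,(L(x)\otimes x)$. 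The crucial ingredient here is the Kronecker identity $(L(x)\otimes x)H^{-1}x=(L(x)H^{-1}x)\otimes x$, which yields $\Psi' H^{-1}x=d(x)\,x^+$; that is, the off-diagonal block reproduces the scaled closed-loop successor state.

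Next I would take a Schur complement with respect to the block $\gamma d(x)I$ (legitimate since $d(x)>0$ as $d\in\text{SOS}_+$), using $\Phi^\top\Phi=L^\top R L+d^2 HQH$, and then evaluate the resulting $2\times2$ quadratic form at $[a^\top,\ (H^{-1}x)^\top]^\top$. Dividing by $d(x)>0$, this gives $a^\top(H-\tfrac{\gamma}{c}I)a+2a^\top x^++x^\top H^{-1}x-\tfrac1\gamma\ell(u(x),x)\ge0$ for all $a$. Choosing $a=-H^{-1}x^+$ completes the square and produces $x^\top H^{-1}x-(x^+)^\top H^{-1}x^+\ge\tfrac1\gamma\ell(u(x),x)$, with the $-\tfrac{\gamma}{c}I$ term contributing only additional nonnegative slack; multiplying by $\gamma$ gives exactly the decrease $V(x^+)\le V(x)-\ell(u(x),x)$.

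Finally, telescoping this inequality along any closed-loop trajectory of any $(A,B,\tilde B)\in\Sigma$ and using $V\ge0$ yields $\sum_{k=0}^{\infty}\ell(\bar u_k,\bar x_k)\le V(x_t)\le\gamma$; since this holds for every consistent system, the worst-case cost of the feedback law is at most $\gamma$, and following the standard min-max MPC argument this law is a feasible candidate for \eqref{mpc:obj}-\eqref{mpc:con2}, so the optimal cost is upper bounded by $\gamma$. I expect the main obstacle to be the bookkeeping in the congruence step: correctly matching the block partition of \eqref{sdp:con2} to $[I\ A\ B\ \tilde B]$, verifying that the $-M(\tau(x))$ term is precisely what Lemma~1 renders harmless, and confirming the Kronecker identity so that $\Psi'H^{-1}x=d(x)x^+$. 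Once these align, the Schur-complement and completion-of-squares steps are routine.
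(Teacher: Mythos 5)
Your proposal is correct and follows essentially the same route as the paper's proof: absorb the data-consistency QMI via the multipliers $\tau$ (S-procedure) after a congruence with $\begin{bmatrix}I & A & B & \tilde{B}\end{bmatrix}$, take the Schur complement with respect to $\gamma d(x)I$, use the Kronecker identity $(L(x)\otimes x)H^{-1}x=(L(x)H^{-1}x)\otimes x$ to recognize the scaled closed-loop successor state, and telescope the resulting decrease $V(x^{+})\le V(x)-\ell(u,x)$ together with $x_t^\top H^{-1}x_t\le 1$ from \eqref{sdp:con1}. The only cosmetic difference is that the paper converts the resulting $2\times 2$ block inequality into the matrix inequality \eqref{thm1:8} via two further Schur complements and the Woodbury identity (exploiting $H-\tfrac{\gamma}{c}I\succeq 0$, i.e., $P-cI\preceq 0$), whereas you evaluate the positive semidefinite quadratic form directly at $\begin{bmatrix}-(H^{-1}x^{+})^\top & (H^{-1}x)^\top\end{bmatrix}^\top$, which yields the same scalar inequality with the $\tfrac{\gamma}{c}$-term appearing only as additional nonnegative slack.
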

\begin{proof}
We apply the Schur complement to $\gamma d(x)$ in the matrix $\begin{bmatrix}
d(x)H &\Phi(x)^\top\\
\Phi(x) &\gamma d(x) I\end{bmatrix}$ in \eqref{sdp:con2} and obtain \eqref{thm11}.
Then applying the Schur complement to the right lower block in \eqref{thm11}, we obtain that \eqref{thm1:1} and 
$d(x)H-\tfrac{1}{\gamma d(x)}\Phi(x)^\top \Phi(x)\succeq 0$
for all $x\in\mathbb{R}^n$.
\begin{figure*}
\begin{equation}\label{thm11}
    \begin{bmatrix}
        \begin{bmatrix}
            d(x)(H-\frac{\gamma}{c}I) &0\\
            0 &0
        \end{bmatrix}-M(\tau(x)) &\begin{bmatrix}0\\ Hd(x)\\
            L(x)\\
            L(x)\otimes x
            \end{bmatrix}\\
           \begin{bmatrix}0 &d(x)H &L(x)^\top&
            (L(x)\otimes x)^\top 
            \end{bmatrix} &d(x)H-\frac{1}{\gamma d(x)}\Phi(x)^\top \Phi(x)
    \end{bmatrix}\in \text{SOS}[x, 2\alpha]^{3n+m+mn},
\end{equation}
\begin{equation}\label{thm1:1}
\begin{matrix}\underbrace{
    \begin{bmatrix}
        d(x)(H-\frac{\gamma}{c}I) &0\\
        0 &-\begin{bmatrix}
            Hd(x)\\
            L(x)\\
            L(x)\otimes x
        \end{bmatrix}
        (d(x) H\!-\!\frac{1}{\gamma d(x)}\Phi(x)^\top \Phi(x))^{-1}
        \begin{bmatrix}
            Hd(x)\\
            L(x)\\
            L(x)\otimes x
        \end{bmatrix}^\top
    \end{bmatrix}}\\S\end{matrix}-M(\tau(x))\succeq 0
\end{equation}
\vspace{-1pt}
\addtocounter{equation}{1}
\begin{equation}\label{thm1:3}
    d(x) (H-\tfrac{\gamma}{c}I)-[AHd(x)\!+\!BL(x)\!+\!\tilde{B}(L(x)\otimes x)](d(x)H\!-\tfrac{1}{\gamma d(x)}\Phi(x)^\top \Phi(x))^{-1}[AHd(x)\!+\!BL(x)\!+\!\tilde{B}(L(x)\otimes x)]^\top\succeq 0
\end{equation}
\begin{equation}\label{thm1:4}
[AHd(x)+BL(x)+\tilde{B}(L(x)\otimes x)]^\top (d(x)(H-\tfrac{\gamma}{c}I))^{-1}[\star]-d(x)H+\tfrac{1}{\gamma d(x)}\Phi(x)^\top \Phi(x)\preceq 0
\end{equation}
\vspace{-10pt}
\addtocounter{equation}{+1}
\begin{equation}
\label{thm1:41}
[A\!+\!B\tfrac{1}{d(x)}L(x)H^{-1}\!\!+\tilde{B}\tfrac{1}{d(x)}(L(x)\otimes x)H^{-1}]^\top \!\gamma(H\!-\tfrac{\gamma}{c}I)^{-1}\![\star]
-\gamma H^{-1}+\tfrac{1}{d(x)^2}H^{-1}\Phi(x)^\top \Phi(x) H^{-1}\preceq 0
\end{equation}
\rule{\textwidth}{0.4pt} 
\vspace{-25pt}
\end{figure*}
Recall the QMI in \eqref{Sigma}. Pre- and post-multiplying \eqref{thm1:1} with $\begin{bmatrix}I &A &B &\tilde{B}\end{bmatrix}$ and its transpose, respectively, and using the S-procedure on the resulting inequality imply 
\addtocounter{equation}{-5}
\begin{equation}\label{thm1:2}
    \begin{bmatrix}
        I &A &B &\tilde{B}
    \end{bmatrix}
    S
    \begin{bmatrix}
        I &A &B &\tilde{B}
    \end{bmatrix}^\top\succeq 0
\end{equation}
for all $(A, B, \tilde{B})\in \Sigma$.
Expanding \eqref{thm1:2} yields equivalently \eqref{thm1:3}.
Applying the Schur complement twice to \eqref{thm1:3}, we obtain \eqref{thm1:4} and
\addtocounter{equation}{2}
\begin{equation}\label{thm1:31}
    H-\tfrac{\gamma}{c}I\succeq 0.
\end{equation}
Multiplying \eqref{thm1:4} from left and right by $\sqrt{\frac{\gamma}{d(x)}}H^{-1}$, we obtain \eqref{thm1:41}.
Let $P=\gamma H^{-1}$, $K(x)=L(x)H^{-1}$ and $\Pi(x)=A+B\frac{1}{d(x)}K(x)+\tilde{B}(\frac{1}{d(x)}K(x)\otimes x)$.
Since $(L(x)\otimes x)H^{-1}=(L(x)H^{-1})\otimes x=K(x)\otimes x$, \eqref{thm1:41} implies 
\addtocounter{equation}{+1}
\begin{equation}\label{thm1:5}
        \Pi(x)^\top\! (P^{-1}\!\!-\!\tfrac{1}{c}I)^{-1}\Pi(x) \!-\! P\! 
        +Q\!+\!\tfrac{1}{d(x)^2}K(x)^
    \top RK(x)\preceq 0,
\end{equation}
where we use the definition of $\Phi(x)$ to deduce it.
Using the Woodbury matrix identity \cite{MR0038136}, we have $(P^{-1}-\frac{1}{c}I)^{-1}=P-P(P-cI)^{-1}P$.
Thus, \eqref{thm1:5} is equivalent to
\begin{multline}\label{thm1:6}
        \Pi(x)^\top [P-P(P-cI)^{-1}P]\Pi(x) - P\\+Q+\tfrac{1}{d(x)^2}K(x)^
    \top RK(x)\preceq 0.
\end{multline}
Using the same arguments as in \cite[equations (15)-(17)]{xie2024minmaxrobust}, \eqref{thm1:31} is equivalent to $P-cI\preceq 0$.
Using the Schur complement, \eqref{thm1:6}  and $P-cI\preceq 0$ imply
\begin{equation}\label{thm1:7}
\begin{bmatrix}
\Pi(x)^\top \!P\Pi(x)\!\!-\!\!P\!+\!Q\!+\!\frac{1}{d(x)^2}K(x)^\top\! R K(x) \!\!\!&\Pi(x)^\top P\\
P \Pi(x) \!\!\!&P-cI
\end{bmatrix}\preceq 0.
\end{equation}
This implies for all $(A, B, \tilde{B})\in\Sigma$, we have 
\begin{equation}\label{thm1:8}
    \Pi(x)^\top P\Pi(x) -P+Q+\tfrac{1}{d(x)^2}K(x)^\top R K(x)\preceq 0.
\end{equation}
Multiplying \eqref{thm1:8} from the left and the right by $x^\top$ and $x$, respectively, we obtain
\begin{equation}\label{thm1:9}
    \|x_+\|_P^2-\|x\|_P^2\leq -\ell(u, x).
\end{equation}
for all $x\in\mathbb{R}^n$ and all $(A, B, \tilde{B})\in\Sigma$ with $x_+=\Pi(x) x$ and $u=\frac{1}{d(x)}K(x)x$, where we use $(K(x)\otimes x)x=K(x)x\otimes x$.
Using \eqref{thm1:9} and the same reasoning as in \cite[equation (21)-(24)]{xie2024minmaxrobust}, which also applies to bilinear systems, we conclude that $\gamma$ is an upper bound on the optimal cost of \eqref{mpc}.
\end{proof}

The SOS program \eqref{sdp} is linear in the decision variables $\gamma$, $H$, $L$ and $\tau$ when the variable $d$ is fixed.
The parameter $d$ can be pre-selected as a tuning parameter, which allows \eqref{sdp} to be solved using convex optimization techniques.
An iterative procedure can be used to iteratively update $d$ and other decision variables.
The choice of $c$ influences the closed-loop performance.
Both excessively small or too large values of $c$ can lead to poor closed-loop performance. 
Besides, a smaller value of $c$ results in a smaller RPI set, but also leads to a smaller feasible region.
A detailed discussion on selecting an appropriate $c$ to achieve better closed-loop behavior can be found in \cite[Remark 3]{xie2024minmaxrobust}.

The SOS program \eqref{sdp} is solved in a receding-horizon manner, see Algorithm~1.
To ensure convexity of \eqref{sdp}, we fix $d$ throughout the algorithm, but we note that $d$ can be adapted online to improve performance.
In Algorithm~1, we define $\tilde{K}=K_{x_{t-1}}^\star, \tilde{P} = P_{x_{t-1}}^\star, \tilde{d}=d$, where $t-1$ is the last time step when $\gamma_{x_{t-1}}^\star>\tfrac{c^2}{\lambda_{\min}(Q)\lambda_{\min}(G)}$.
We will later show in Theorem \ref{theorem2} that the closed-loop system converges to the RPI set $\mathcal{E}_{\mathrm{RPI}}=\{x\in\mathbb{R}^n:\|x\|_{\tilde{P}}^2\leq \tfrac{c^2}{\lambda_{\min}(Q)\lambda_{\min}(G)}\}$ for any initial state inside the region of attraction (ROA) $\mathcal{E}_{\mathrm{ROA}}=\{x\in\mathbb{R}^n:\|x\|_{P_{x_0}^\star}\leq \gamma_{x_0}^\star\}$.
When the state is outside the RPI set, i.e., $\gamma_{x_t}^\star>\tfrac{c^2}{\lambda_{\min}(Q)\lambda_{\min}(G)}$, we implement the first computed input, measure the state at the next time step, and resolve the problem \eqref{sdp}.
When the state is inside the RPI set, we stop solving the problem \eqref{sdp} and directly apply the input  $u_t=\frac{1}{\tilde{d}(x_t)}\tilde{K}(x_t)x_t$ from this time onward.

\begin{remark}
When removing the first constraint in \eqref{sdp:con1} as well as the constraints \eqref{sdp:con3}, the SOS program \eqref{sdp} yields a robust linear quadratic regulator for bilinear systems.
In contrast to existing results for continuous-time systems \cite{guo2021data, strasser2021data}, the proposed approach only requires input-state measurements but no derivative data.
\end{remark}

\begin{algorithm}[htb]
\begin{algorithmic}[1]
\caption{\!Data-driven min-max MPC scheme.\!\!\!}
    \State \algorithmicrequire{ $U, X$, $Q, R, S_x, S_u$, $c$, $G$, $d$}\;
    \State At time $t=0$, measure state $x_0$\;
    \State Solve the problem \eqref{sdp} for a given $d^\star_{x_t}=d$\;
    \State Apply the input $u_t=\frac{1}{d^\star_{x_t}(x_t)}K_{x_t}^\star(x_t)x_t$\;
    \State Set $t=t+1$,  measure state $x_t$\;
    \State Solve the SOS problem \eqref{sdp} for a given $d^\star_{x_t}=d$\;
    \If{$\gamma_{x_t}^\star>\frac{c^2}{\lambda_{\min}(Q)\lambda_{\min}(G)}$}\;
    \State Apply the input $u_t=\frac{1}{d^\star_{x_t}(x_t)}K_{x_t}^\star(x_t)x_t$\;
    \State Set $t=t+1$,  measure state $x_t$, go back to 6\;
    \ElsIf{$\gamma_{x_t}^\star\leq\frac{c^2}{\lambda_{\min}(Q)\lambda_{\min}(G)}$}
    \State Set $\tilde{K}=K_{x_{t-1}}^\star, \tilde{P} = P_{x_{t-1}}^\star, \tilde{d}=d$\;
    \State Apply the input $u_t=\frac{1}{\tilde{d}(x_t)}\tilde{K}(x_t)x_t$\;
    \State Set $t=t+1$,  measure state $x_t$, go back to 12\;
    \EndIf
    \label{algorithm:robust}
\end{algorithmic}
\end{algorithm}

\subsection{Closed-loop Guarantees}\label{sec:3.3}
In the following theorem, we prove that problem \eqref{sdp} is recursively feasible and 
the resulting closed-loop system is robustly stabilized to an RPI set for any $(A, B, \tilde{B})\in\Sigma$.
Further, we show that the input and state constraints are satisfied for the closed-loop trajectory.

\begin{mythm}\label{theorem2}
Suppose Assumption \ref{assumption1} holds.
If the SOS program \eqref{sdp} is feasible at time $t=0$ and $\gamma_{x_0}^\star\geq \frac{c^2}{\lambda_{\min}(Q)\lambda_{\min}(G)}$, then
\begin{enumerate}[i)]
\item the optimization problem \eqref{sdp} is feasible for any states $x_t\in\mathcal{E}_{\mathrm{ROA}}\backslash\mathcal{E}_{\mathrm{RPI}}$;
\item the set $\mathcal{E}_{\mathrm{RPI}}$ is robustly stabilized for the closed-loop system resulting from Algorithm~1 with any $(A, B, \tilde{B})\in\Sigma$;
\item the closed-loop trajectory resulting from Algorithm~1 with any $(A, B, \tilde{B})\in\Sigma$ satisfies the constraints, i.e., $\|u_t\|_{S_u}\leq 1, \|x_t\|_{S_x}\leq 1$ for all $t\in \mathbb{N}$.
\end{enumerate}
\end{mythm}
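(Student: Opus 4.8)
The plan is to reduce all three claims to a single disturbance-aware Lyapunov recursion and then process the two branches of Algorithm~1 separately. First I would upgrade the nominal decrease obtained in the proof of Theorem~1 to a noisy one. Since the data-generating system $(A_s,B_s,\tilde{B}_s)$ lies in $\Sigma$, inequality \eqref{thm1:5} applies along the closed loop; multiplying it from the left and right by $x_t^\top$ and $x_t$ gives $\|\hat{x}_{t+1}\|_{(P^{-1}-\frac1cI)^{-1}}^2\le\|x_t\|_P^2-\ell(u_t,x_t)$ for the noise-free successor $\hat{x}_{t+1}=\Pi(x_t)x_t$ with $P=P_{x_t}^\star$. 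To absorb $x_{t+1}=\hat{x}_{t+1}+\omega_t$ I would invoke the elementary inequality $(a+b)^\top P(a+b)\le a^\top(P^{-1}-\tfrac1cI)^{-1}a+c\,b^\top b$, valid whenever $0\prec P\preceq cI$; it follows from a Schur complement on the $2\times2$ block matrix with diagonal blocks $(P^{-1}-\tfrac1cI)^{-1}-P$ and $cI-P$ and off-diagonal block $-P$, together with the Woodbury identity $(P^{-1}-\tfrac1cI)^{-1}=P+P(cI-P)^{-1}P$ already used in Theorem~1 \cite{MR0038136}. Using $\ell(u_t,x_t)\ge\lambda_{\min}(Q)\|x_t\|^2\ge\tfrac{\lambda_{\min}(Q)}{c}\|x_t\|_P^2$ (since $P\preceq cI$) and $\|\omega_t\|^2\le\tfrac1{\lambda_{\min}(G)}$ from Assumption~\ref{assumption1}, this produces the one-step contraction $\|x_{t+1}\|_{P_{x_t}^\star}^2\le\rho\|x_t\|_{P_{x_t}^\star}^2+\delta$ with $\rho=1-\tfrac{\lambda_{\min}(Q)}{c}\in[0,1)$, $\delta=\tfrac{c}{\lambda_{\min}(G)}$, whose fixed point is exactly $\beta:=\tfrac{c^2}{\lambda_{\min}(Q)\lambda_{\min}(G)}$.

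For claim~i) I would use the previous optimizer as a feasibility certificate. Every constraint in \eqref{sdp:con2}--\eqref{sdp:con3} and the SOS/$\tau$ parts of \eqref{sdp:con1} are independent of the current state, so the only condition to re-verify at $x_{t+1}$ is the first block of \eqref{sdp:con1}, which by Schur complement is $\|x_{t+1}\|_{(H_{x_t}^\star)^{-1}}^2=\tfrac1{\gamma_{x_t}^\star}\|x_{t+1}\|_{P_{x_t}^\star}^2\le1$. The one-step contraction together with $\|x_t\|_{P_{x_t}^\star}^2\le\gamma_{x_t}^\star$ gives $\|x_{t+1}\|_{P_{x_t}^\star}^2\le\rho\gamma_{x_t}^\star+\delta\le\gamma_{x_t}^\star$ precisely when $\gamma_{x_t}^\star\ge\beta$, i.e.\ for $x_t\in\mathcal{E}_{\mathrm{ROA}}\setminus\mathcal{E}_{\mathrm{RPI}}$, so \eqref{sdp} remains feasible and $x_{t+1}\in\mathcal{E}_{\mathrm{ROA}}$. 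To get an actual decay of the optimal value I would sharpen this certificate by the scaling $(H,L,\gamma,\tau)\mapsto s(H,L,\gamma,\tau)$ with $d$ fixed: the matrix in \eqref{sdp:con2} is homogeneous of degree one in these variables and hence stays SOS for every $s>0$, while the two inequalities in \eqref{sdp:con3} only relax for $s\in(0,1]$; choosing $s=\tfrac1{\gamma_{x_t}^\star}\|x_{t+1}\|_{P_{x_t}^\star}^2\le1$ makes the first block of \eqref{sdp:con1} tight and yields the value recursion $\gamma_{x_{t+1}}^\star\le\rho\gamma_{x_t}^\star+\delta$.

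For claim~ii) I would run the two phases of Algorithm~1. While $\gamma_{x_t}^\star>\beta$, the value recursion forces $\gamma_{x_t}^\star-\beta\le\rho^t(\gamma_{x_0}^\star-\beta)\to0$, so after finitely many steps $\bar{t}$ one has $\gamma_{x_{\bar{t}}}^\star\le\beta$ and the scheme freezes $\tilde{P}=P_{x_{\bar{t}-1}}^\star$, $\tilde{K}$, $\tilde{d}$. For $t\ge\bar{t}$ the frozen controller is exactly the minimizer at $\bar{t}-1$, so \eqref{thm1:5} holds with $P=\tilde{P}$ and the same $c$-margin argument gives $\|x_{t+1}\|_{\tilde{P}}^2\le\rho\|x_t\|_{\tilde{P}}^2+\delta$. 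Hence $\mathcal{E}_{\mathrm{RPI}}=\{\|x\|_{\tilde{P}}^2\le\beta\}$ is robustly positively invariant, since $\|x_t\|_{\tilde{P}}^2\le\beta$ implies $\|x_{t+1}\|_{\tilde{P}}^2\le\rho\beta+\delta=\beta$, and attractive, giving the claimed stabilization; the telescoping and convergence bookkeeping of \cite[eqs.~(21)--(24)]{xie2024minmaxrobust} transfers verbatim.

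For claim~iii) the state bound follows from the first inequality of \eqref{sdp:con3}, whose Schur complement is $S_x\preceq(H_{x_t}^\star)^{-1}$, so that $\|x_t\|_{S_x}^2\le\|x_t\|_{(H_{x_t}^\star)^{-1}}^2\le1$ by feasibility; the input bound follows from the second inequality of \eqref{sdp:con3}, whose Schur complement $H\succeq\tfrac1{d(x)^2}L(x)^\top S_uL(x)$ yields $\|u_t\|_{S_u}^2=\tfrac1{d(x_t)^2}x_t^\top (H_{x_t}^\star)^{-1}L^\top S_uL(H_{x_t}^\star)^{-1}x_t\le\|x_t\|_{(H_{x_t}^\star)^{-1}}^2\le1$. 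In the frozen phase I would propagate $\|x_t\|_{\tilde{P}}^2\le\max\{\|x_{\bar{t}}\|_{\tilde{P}}^2,\beta\}\le\gamma_{x_{\bar{t}-1}}^\star$ and combine it with $S_x\preceq\tilde{H}^{-1}=\tfrac1{\gamma_{x_{\bar{t}-1}}^\star}\tilde{P}$ and the frozen input inequality to recover both bounds for all $t\ge\bar{t}$. The main obstacle is the time-varying Lyapunov matrix $P_{x_t}^\star$: unlike the LTI case no single quadratic form certifies stability, so the delicate points are the homogeneity/scaling argument that upgrades mere monotonicity of $\gamma_{x_t}^\star$ to the strict value recursion, and the hand-off at the switching time $\bar{t}$, where the frozen ellipsoid $\{\|x\|_{\tilde{P}}^2\le\gamma_{x_{\bar{t}-1}}^\star\}$ must be shown to contain the ensuing trajectory and to respect the constraints.
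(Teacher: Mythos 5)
Your proposal is correct and follows essentially the same route as the paper: the one-step contraction $\|x_{t+1}\|_{P_{x_t}^\star}^2\le\rho\|x_t\|_{P_{x_t}^\star}^2+\delta$ extracted from \eqref{thm1:7} under Assumption~\ref{assumption1}, reuse of the previous (suitably scaled) optimizer as a feasibility certificate at $x_{t+1}$, and Schur complements of \eqref{sdp:con3} for input and state constraint satisfaction. You in fact spell out several steps --- the homogeneity-based value recursion for $\gamma_{x_t}^\star$, and the hand-off to the frozen controller at the switching time --- that the paper delegates to \cite[Theorem 2]{xie2024minmaxrobust}.
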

\begin{proof}
Parts of the proof are analogous to the proof 
in \cite[Theorem 2]{xie2024minmaxrobust}, so we only state the parts different from \cite{xie2024minmaxrobust}.
The proof is composed of two parts. 
Part I proves the recursive feasibility of the problem \eqref{sdp} and robust stability of the closed-loop system.
Part II proves that the input and state constraints are satisfied for the closed-loop system.

\textbf{Part I: }Assuming problem \eqref{sdp} is feasible at time $t$, the inequality \eqref{thm1:7} holds for ${d}_{x_t}^\star, P_{x_t}^\star, K_{x_t}^\star$ and for any $(A, B, \tilde{B})\in\Sigma$ for all $x\in\mathbb{R}^n$.
Pre- and post-multiplying \eqref{thm1:7} by $\begin{bmatrix}x_t^\top &\omega_t^\top\end{bmatrix}$ and its transpose, respectively, the inequality 
\begin{multline}\label{thm2:1}
    [\Pi^\star(x_t)x_t+\omega_t]^\top P_{x_t}^\star [\Pi^\star(x_t)x_t+\omega_t]-
    x_t^\top P_{x_t}^\star x_t\leq \\
    -x_t^\top (Q+\tfrac{1}{d^\star_{x_t}(x_t)^2}K^\star_{x_t}(x_t)^\top R K^\star_{x_t}(x_t))x_t+c\omega_t^\top \omega_t
\end{multline}
with $\Pi^\star(x)=A+B\frac{1}{d_x^\star(x)}K_x^\star(x)+\tilde{B}\frac{1}{d_x^\star(x)}(K_x^\star(x)\otimes x)$ holds for all $x_t\in\mathbb{R}^n, \omega_t\in\mathbb{R}^n$, and $(A, B, \tilde{B})\in\Sigma$.
Further, by exploiting that $\omega_t$ satisfies Assumption \ref{assumption1}, we have $\omega_t^\top\omega_t\leq \frac{\|\omega_t\|_G^2}{\lambda_{\min}(G)}\leq \frac{1}{\lambda_{\min}(G)}$.
Since $(K^\star_{x_t}(x_t)\otimes x_t)x_t=(K^\star_{x_t}(x_t)x_t)\otimes x_t$ and $x_{t+1}=\Pi^\star(x_t)x_t+\omega_t$, \eqref{thm2:1} yields
\begin{equation}
\begin{aligned}
&\|x_{t+1}\|_{P_{x_t}^\star}^2-\|x_t\|_{P_{x_t}^\star}^2
\\\leq &-x_t^\top (Q\!+\!\tfrac{1}{d^\star_{x_t}(x_t)^2}K^\star_{x_t}(x_t)^\top R K^\star_{x_t}(x_t))x_t\!+\!\tfrac{c}{\lambda_{\min}(G)}\label{thm2:2}
\\\leq &-\|x_t\|_Q^2+\tfrac{c}{\lambda_{\min}(G)}.
\end{aligned}
\end{equation}
From here on, recursive feasibility and stability can be shown analogously to \cite[Theorem 2]{xie2024minmaxrobust}.

\textbf{Part II: }
We first assume that the state $x_t\in\mathcal{E}_{\mathrm{ROA}}\backslash\mathcal{E}_{\mathrm{RPI}}$.
Using the Schur complement, 
the second SOS condition in \eqref{sdp:con3} yields
\begin{equation}\label{input:nominal_lmi4}
d(x)H-\tfrac{1}{d(x)}L(x)^{\top} S_u L(x)\succeq 0.
\end{equation}
Multiplying both sides of \eqref{input:nominal_lmi4} with $\sqrt{\frac{1}{d(x)}}H^{-1}$, we obtain 
\begin{equation}
    \tfrac{1}{\gamma}P-\tfrac{1}{d(x)^2}K(x)^\top S_u K(x)\succeq 0.\label{thm2:3}
\end{equation}
Using \eqref{thm2:3} and $1-\gamma^{-1}\gamma=0$ yields
\begin{equation}\label{thm2:22}
    \begin{bmatrix}
        -\tfrac{1}{d(x)^2}K(x)^\top S_u K(x) &0\\
        0 &1
    \end{bmatrix}-\gamma^{-1}
    \begin{bmatrix}
        -P &0\\
        0 &\gamma
    \end{bmatrix}\succeq 0.
\end{equation}
Pre- and post-multiplying \eqref{thm2:22} with $\begin{bmatrix}x^\top &1\end{bmatrix}$ and its transpose, respectively, we obtain
\begin{equation}\label{thm2:23}
    \begin{bmatrix}
        x\\1
    \end{bmatrix}^\top\!\!\!\!
    \left(\!\begin{bmatrix}
        -\tfrac{1}{d(x)^2}K(x)^\top S_u K(x) &0\\
        0 &1
    \end{bmatrix}\!\!-\gamma^{-1}\!\!
    \begin{bmatrix}
        -P \!\!&0\\
        0 \!\!&\gamma
    \end{bmatrix}\!\right)\!\!\begin{bmatrix}
        x\\1
    \end{bmatrix}\!\succeq\! 0
\end{equation}
for all $x\in\mathbb{R}^n$.
Since $d^\star_{x_t}$, $K^\star_{x_t}$, $\gamma^\star_{x_t}$, $P^\star_{x_t}$ is the optimal solution of \eqref{sdp}, the inequality \eqref{thm2:23} holds for $d=d^\star_{x_t}$, $K=K^\star_{x_t}$, $\gamma=\gamma^\star_{x_t}$, $P=P^\star_{x_t}$.
This implies for any state $x$ satisfying
\begin{equation}\nonumber
    \begin{bmatrix}
        x\\
        1
    \end{bmatrix}^\top
    \begin{bmatrix}
        -P^\star_{x_t} &0\\
        0 &\gamma^\star_{x_t}
    \end{bmatrix}
    \begin{bmatrix}
        x\\
        1
    \end{bmatrix}
    \geq 0
\end{equation}
the inequality 
\begin{equation}\nonumber
    \begin{bmatrix}
        x\\
        1
    \end{bmatrix}^\top
    \begin{bmatrix}
        -\frac{1}{d^\star_{x_t}(x)^2}K^\star_{x_t}(x)^\top S_u K^\star_{x_t}(x) &0\\
        0 &1
    \end{bmatrix}
    \begin{bmatrix}
        x\\
        1
    \end{bmatrix}
    \geq 0.
\end{equation}
Thus, we have
$\max\limits_{x\in\mathcal{E}_t}\|\frac{1}{d^\star_{x_t}(x)}K^\star_{x_t}(x) x\|_{S_u}^2\leq 1$,
where $\mathcal{E}_t=\{x\in\mathbb{R}^n: \|x\|_{P_{x_t}^\star}\leq \gamma^\star_{x_t}\}$.
According to the arguments in \cite[Theorem 2]{xie2024minmaxrobust}, for any state $x_t\in\mathcal{E}_t$ and for any $\omega_t$ satisfying Assumption \ref{assumption1}, the state $x_{t+1}=\Pi^\star(x_t)x_t+\omega_t$ lies inside the set $\mathcal{E}_t$ at the next time step for any $(A, B, \tilde{B})\in\Sigma$.
Thus, the input constraint is satisfied for $u_t=\frac{1}{d^\star_{x_t}(x_t)}K^\star_{x_t}(x_t)x_t$.
The remaining parts of the proof regarding input and state constraint satisfaction for states within the RPI set follow the same reasoning as in \cite[Theorem 2]{xie2024minmaxrobust}.
\end{proof}

\begin{remark}
Theorem \ref{theorem2} shows that the closed-loop system converges to the RPI set $\mathcal{E}_{\mathrm{RPI}}$ for any $(A, B, \tilde{B})\in\Sigma$ if the SOS program \eqref{sdp} is feasible at time $t=0$ and $\gamma_{x_0}^\star\geq \frac{c^2}{\lambda_{\min}(Q)\lambda_{\min}(G)}$.
Besides, the closed-loop system resulting from Algorithm~1 satisfies the input and state constraints.
Compared with \cite{xie2024minmaxrobust}, this work extends the theoretical results of data-driven min-max MPC scheme from LTI systems to bilinear systems.
\end{remark}

\section{Simulation}\label{sec:4}
In this section, we illustrate the proposed data-driven rational controller design scheme with a numerical example.
We consider a zone temperature process used for building control \cite{huang2011model} with the system dynamics
\begin{equation}
    x_{t+1}=x_t+T_sV_z^{-1}u_t(T_0-x_t)+\omega_t,
\end{equation}
where $x_t$ is the zone temperature and $u_t$ is the air volume flow rate.
In the simulation, we choose the parameters as $V_z=2, T_0=-1, T_s=1$ such that the system matrices are given as $A_s=1, B_s=-0.5, \tilde{B}_s=-0.5$.
We assume that the system matrices are unknown, but a sequence of input-state trajectory of length $T=100$ is available, and the process noise $\omega_t$ satisfies Assumption \ref{assumption1} with $G=1\times 10^6$.

In the bilinear data-driven min-max MPC problem, the weighting matrices of the stage cost are set to $Q=R=0.01$.
The initial state is chosen as $x_0=-0.1$.
The user-defined parameter $c$ is chosen as $5$, the denominator $d$ is chosen as $0.01+(1+x)^{2\alpha}$ with $\alpha=2$.
To analyze the effect of different constraint matrices, we implement the proposed bilinear data-driven min-max MPC scheme for different values $S_u\in\{500, 1000, 1500, 5000\}$ and keep $S_x=100$ fixed.
Fig.~\ref{trajectory_constraint} presents the input and state trajectories of the closed-loop system resulting from the proposed bilinear data-driven min-max MPC scheme.
The input and state constraints are satisfied for all four closed-loop trajectories.
However, with larger values of $S_u$, the resulting input becomes more restricted and the closed-loop state trajectory converges slower to the RPI set.
We emphasize that the input always remains above the constraint bound.
This is because the constraint reformulation in \eqref{sdp:con3}, which use the S-procedure, makes the reformulated constraints more conservative.

We now choose the denominator $d$ as before but with varying values of $\alpha$.
The matrices for the input and state constraints are chosen as $S_u=500$ and $S_x=100$.
Other parameters such as the weighting matrix are kept unchanged.
Fig.~\ref{cost} illustrates the closed-loop cost and computation time of the proposed scheme for different values of $\alpha$.
As $\alpha$ increases, the closed-loop cost decreases, but the computation time increases.
When $\alpha=1$, the SOS program \eqref{sdp} is infeasible.
This is because a larger $\alpha$ leads to a larger feasible region for controller design but also increases the computational complexity.

\begin{figure}
    \centering
    \includegraphics[width=0.5\textwidth]{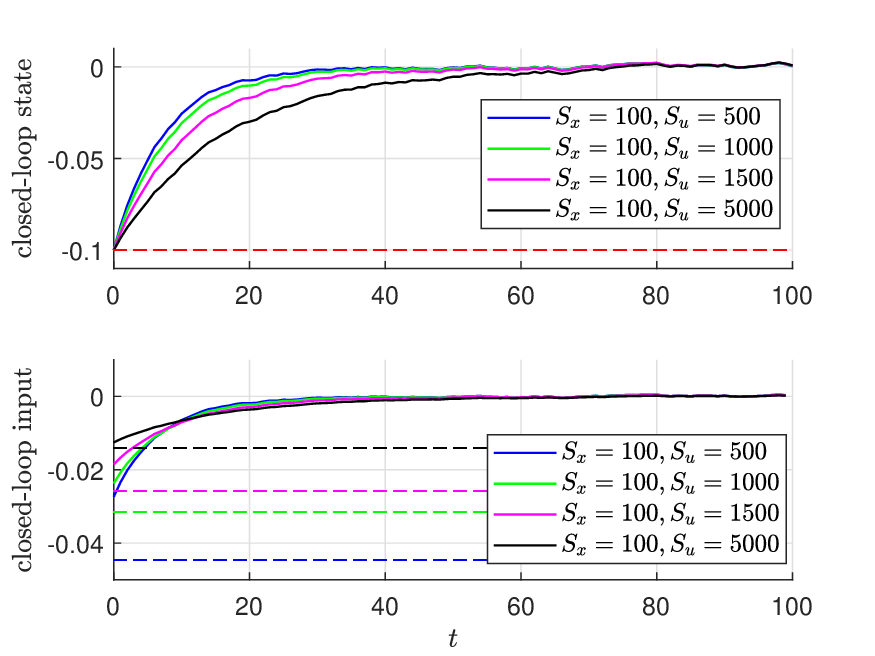}
    \caption{Closed-loop state and input trajectories of the proposed bilinear data-driven min-max MPC scheme with different input constraints.}
    \label{trajectory_constraint}
\end{figure}

\begin{figure}
    \centering
    \includegraphics[width=0.5\textwidth]{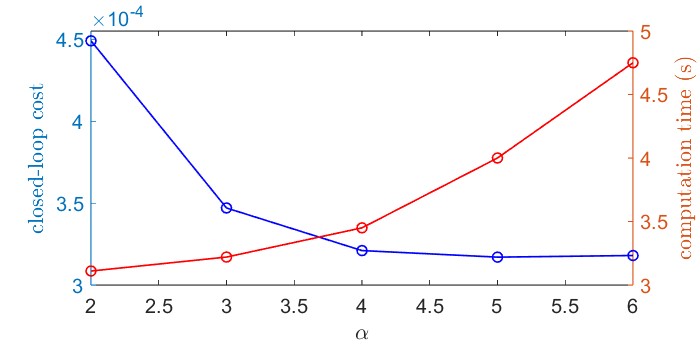}
    \caption{Closed-loop cost and computation time with different values of $\alpha$.}
    \label{cost}
    \vspace{-30pt}
\end{figure}

\section{Conclusion}\label{sec:5}

In this paper, we proposed a bilinear data-driven min-max MPC scheme for unknown discrete-time bilinear systems using noisy input-state data.
We reformulated the bilinear data-driven min-max MPC problem as an SOS program and proposed a receding-horizon algorithm that repeatedly solves the SOS program and derives a rational controller.
The resulting closed-loop system is proven to be robustly stabilized to an RPI set and satisfy input and state constraints.
Simulation results demonstrate the effectiveness of the proposed scheme.
In the future, we plan to extend this scheme to other classes of nonlinear systems such as, e.g., more general polynomial systems.

\bibliographystyle{unsrt}
\bibliography{main}

\begin{thebibliography}{10}

\bibitem{martin2023guarantees}
T.~Martin, T.~B. Sch{\"o}n, and F.~Allg{\"o}wer.
\newblock Guarantees for data-driven control of nonlinear systems using
  semidefinite programming: A survey.
\newblock {\em Annual Reviews in Control}, page 100911, 2023.

\bibitem{berberich2024overview}
J.~Berberich and F.~Allg{\"o}wer.
\newblock An overview of systems-theoretic guarantees in data-driven model
  predictive control.
\newblock {\em Annual Review of Control, Robotics, and Autonomous Systems}, 8,
  2024.

\bibitem{mohler1973bilinear}
R.~R. Mohler.
\newblock {\em Bilinear control processes: with applications to engineering,
  ecology and medicine}.
\newblock Academic Press, Inc., 1973.

\bibitem{bloemen2001interpolation}
H.~H.~J. Bloemen, M.~Cannon, and B.~Kouvaritakis.
\newblock Interpolation in {MPC} for discrete time bilinear systems.
\newblock In {\em Proc. IEEE American Control Conference (ACC)}, volume~4,
  pages 3061--3066, 2001.

\bibitem{strasser2023control}
R.~Str{\"a}sser, J.~Berberich, and F.~Allg{\"o}wer.
\newblock Control of bilinear systems using gain-scheduling: Stability and
  performance guarantees.
\newblock In {\em Proc. 62nd IEEE Conference on Decision and Control (CDC)},
  pages 4674--4681, 2023.

\bibitem{vatani2014control}
M.~Vatani, M.~Hovd, and S.~Olaru.
\newblock Control design and analysis for discrete time bilinear systems using
  sum of squares methods.
\newblock In {\em Proc. 53rd IEEE Conference on Decision and Control (CDC)},
  pages 3143--3148, 2014.

\bibitem{bold2024data}
L.~Bold, L.~Gr{\"u}ne, M.~Schaller, and K.~Worthmann.
\newblock Data-driven {MPC} with stability guarantees using extended dynamic
  mode decomposition.
\newblock {\em IEEE Transactions on Automatic Control}, 70(1):534--541, 2025.

\bibitem{strasser2025koopman}
R.~Str{\"a}sser, M.~Schaller, K.~Worthmann, J.~Berberich, and F.~Allg{\"o}wer.
\newblock Koopman-based feedback design with stability guarantees.
\newblock {\em IEEE Transactions on Automatic Control}, 70(1):355--370, 2025.

\bibitem{chatzikiriakos2024end}
N.~Chatzikiriakos, R.~Str{\"a}sser, F.~Allg{\"o}wer, and A.~Iannelli.
\newblock End-to-end guarantees for indirect data-driven control of bilinear
  systems with finite stochastic data.
\newblock {\em arXiv:2409.18010}, 2024.

\bibitem{yuan2022data}
Z.~Yuan and J.~Cort{\'e}s.
\newblock Data-driven optimal control of bilinear systems.
\newblock {\em IEEE Control Systems Letters}, 6:2479--2484, 2022.

\bibitem{bisoffi2020bilinear}
A.~Bisoffi, C.~De~Persis, and P.~Tesi.
\newblock Data-based stabilization of unknown bilinear systems with guaranteed
  basin of attraction.
\newblock {\em Systems \& Control Letters}, 145:104788, 2020.

\bibitem{willems2005note}
J.~C. Willems, P.~Rapisarda, I.~Markovsky, and B.~De~Moor.
\newblock A note on persistency of excitation.
\newblock {\em Systems \& Control Letters}, 54(4):325--329, 2005.

\bibitem{markovsky2021review}
I.~Markovsky and F.~Dörfler.
\newblock Behavioral systems theory in data-driven analysis, signal processing,
  and control.
\newblock {\em Annual Reviews in Control}, 52:42--64, 2021.

\bibitem{van2023informativity}
H.~J. Van~Waarde, J.~Eising, M.~K. Camlibel, and H.~L. Trentelman.
\newblock The informativity approach: To data-driven analysis and control.
\newblock {\em IEEE Control Systems Magazine}, 43(6):32--66, 2023.

\bibitem{van2020noisy}
H.~J. van Waarde, M.~K. Camlibel, and M.~Mesbahi.
\newblock From noisy data to feedback controllers: non-conservative design via
  a matrix {S}-lemma.
\newblock {\em IEEE Transactions on Automatic Control}, 67(1):162--175, 2022.

\bibitem{berberich2023combining}
J.~Berberich, C.~W. Scherer, and F.~Allgöwer.
\newblock Combining prior knowledge and data for robust controller design.
\newblock {\em IEEE Transactions on Automatic Control}, 68(8):4618--4633, 2023.

\bibitem{persis2020formulas}
C.~De~Persis and P.~Tesi.
\newblock Formulas for data-driven control: Stabilization, optimality, and
  robustness.
\newblock {\em IEEE Transactions on Automatic Control}, 65(3):909--924, 2020.

\bibitem{martin2023inference}
T.~Martin and F.~Allgöwer.
\newblock Data-driven inference on optimal input-output properties of
  polynomial systems with focus on nonlinearity measures.
\newblock {\em IEEE Transactions on Automatic Control}, 68(5):2832--2847, 2023.

\bibitem{guo2021data}
M.~Guo, C.~De~Persis, and P.~Tesi.
\newblock Data-driven stabilization of nonlinear polynomial systems with noisy
  data.
\newblock {\em IEEE Transactions on Automatic Control}, 67(8):4210--4217, 2021.

\bibitem{strasser2021data}
R.~Str{\"a}sser, J.~Berberich, and F.~Allg{\"o}wer.
\newblock Data-driven control of nonlinear systems: Beyond polynomial dynamics.
\newblock In {\em Proc. 60th IEEE Conference on Decision and Control (CDC)},
  pages 4344--4351, 2021.

\bibitem{xie2024dataLPV}
Y.~Xie, J.~Berberich, F.~Br{\"a}ndle, and F.~Allg{\"o}wer.
\newblock Data-driven min-max {MPC} for {LPV} systems with unknown scheduling
  signal.
\newblock In {\em Proc. European Control Conference (ECC),
  Preprint:arXiv:2411.05624}, 2024.

\bibitem{verhoek2024decoupling}
C.~Verhoek, J.~Eising, F.~D{\"o}rfler, and R.~T{\'o}th.
\newblock Decoupling parameter variation from noise: Biquadratic lyapunov forms
  in data-driven {LPV} control.
\newblock {\em arXiv:2403.16565}, 2024.

\bibitem{kothare1996robust}
M.~V. Kothare, V.~Balakrishnan, and M.~Morari.
\newblock Robust constrained model predictive control using linear matrix
  inequalities.
\newblock {\em Automatica}, 32(10):1361--1379, 1996.

\bibitem{xie2024minmaxrobust}
Y.~Xie, J.~Berberich, and F.~Allg{\"o}wer.
\newblock Data-driven min-max {MPC} for linear systems: Robustness and
  adaptation.
\newblock {\em arXiv:2404.19096}, 2024.

\bibitem{strasser2024koopman}
R.~Str{\"a}sser, J.~Berberich, and F.~Allg{\"o}wer.
\newblock Koopman-based control using sum-of-squares optimization: Improved
  stability guarantees and data efficiency.
\newblock In {\em Proc. European Control Conference (ECC), Preprint:
  arXiv:2411.03875}, 2024.

\bibitem{MR0038136}
M.~A. Woodbury.
\newblock {\em Inverting modified matrices}.
\newblock Princeton University, Princeton, NJ, 1950.
\newblock Statistical Research Group, Memo. Rep. no. 42,.

\bibitem{huang2011model}
G.~Huang.
\newblock Model predictive control of {VAV} zone thermal systems concerning
  bi-linearity and gain nonlinearity.
\newblock {\em Control Engineering Practice}, 19(7):700--710, 2011.

\end{thebibliography}

\end{document}